\documentclass[conference]{IEEEtran}

\PassOptionsToPackage{hyphens}{url}
\usepackage{url}
\usepackage{hyperref}
\usepackage[pdftex]{graphicx}
\usepackage{xcolor}
\usepackage{amsmath}
\usepackage{amssymb}
\usepackage{amsthm}

\newtheorem{theorem}{Theorem}
\newtheorem{corollary}{Corollary}

\begin{document}

\title{Aquaman: A Transparent Proxy Architecture for Quantum Resilient Key Establishment}

\author{
\IEEEauthorblockN{Tushin Mallick\IEEEauthorrefmark{1}}
\IEEEauthorblockA{Cisco Research}

\and
\IEEEauthorblockN{Ashish Kundu}
\IEEEauthorblockA{Cisco Research}

\and
\IEEEauthorblockN{Ramana Kompella}
\IEEEauthorblockA{Cisco Research}
}

\maketitle

\footnotetext[1]{This work was carried out when the author was at Cisco Research, San Jose, CA, during Summer Internship 2025. He is currently at Northeastern University, Boston, MA.}

\begin{abstract}
The \emph{harvest-now, decrypt-later} (HNDL) threat---adversaries
intercepting and archiving ciphertext today for retrospective
decryption once quantum computers mature---turns the future
quantum threat into a present liability for the public-key
primitives (RSA, Diffie--Hellman, ECC) that anchor modern
session-key exchange.

We present \emph{Aquaman}, a transparent-proxy architecture
for quantum-resilient session-key establishment. A transparent
proxy intercepts session-key requests at the edge of a trusted
network without requiring client-side configuration, deploying
quantum-resistant capability at the network boundary on behalf
of clients that may themselves lack post-quantum cryptography
(PQC). Aquaman supports four operating modes: PQC offloaded to
the proxy for clients without trusted PQC stacks; classical
multi-path key fragmentation over heterogeneous media (with an
optional anonymous proxy-pool variant); QKD with the SKIP /
ETSI~GS~QKD~014 key-delivery interface; and classical/PQC
hybrid handshakes. We implement and evaluate the first two
modes; the latter two are well-trodden in the PQC literature
and we discuss but do not implement them. The implemented
multi-path mode splits the session key into ciphertext
fragments distributed across diverse media (Wi-Fi, Bluetooth,
NFC, cellular, Ethernet); reconstruction requires \emph{all}
fragments. We formalize the security argument and prove that
recovery probability decays as $(B/d)^n$ in the diversity
dimension. A 1{,}000-run prototype evaluation on AWS EC2 shows
that latency is dominated by network transmission, not by
multi-path overhead.
\end{abstract}

\section{Introduction}
\label{sec:intro}

Modern secure communication rests on a small set of public-key
primitives---RSA, Diffie--Hellman~(DH), and elliptic-curve
cryptography~(ECC)---whose security is grounded in the conjectured
intractability of integer factorization and discrete logarithms.
Shor's algorithm efficiently solves both problems on a fault-tolerant
quantum computer of sufficient scale~\cite{shor}. While such a
machine does not yet exist, expert consensus increasingly treats its
arrival as an engineering question rather than a theoretical one,
with one-in-three surveyed cybersecurity experts forecasting a
cryptographically relevant quantum computer (CRQC) before
2032~\cite{mosca2018}.

The urgency is sharpened by the \emph{harvest-now, decrypt-later}
(HNDL) threat model: an adversary intercepts and archives ciphertext
today, deferring decryption until quantum capability is available.
For data whose confidentiality must hold for years or
decades---classified communications, medical records, financial
instruments, and critical-infrastructure
telemetry---present-day classical encryption is therefore already
insufficient even though no present-day quantum adversary
exists~\cite{chen2016}. Mosca's inequality $X + Y \ge Z$
formalizes this race between the shelf-life $X$ of the data, the
migration time $Y$ to quantum-resistant systems, and the time $Z$
until a CRQC is realized; many sectors are already operating with
$X + Y > Z$~\cite{mosca2018}.

The National Institute of Standards and Technology (NIST) has driven
a multi-year standardization effort for post-quantum cryptography
(PQC), publishing the first three PQC standards in August
2024---FIPS~203 (ML-KEM), FIPS~204 (ML-DSA), and FIPS~205
(SLH-DSA)~\cite{NIST-pqc-standard-fips203,NIST-pqc-standard-fips204,NIST-pqc-standard-fips205}.
In March 2025, NIST selected HQC as a fifth, code-based KEM, with a
draft standard expected in 2026 and finalization in
2027~\cite{nist-hqc2025}. Draft NIST IR~8547 recommends deprecating
quantum-vulnerable algorithms by 2030 and disallowing them
by 2035~\cite{nist-ir-8547}.

Despite this progress, immediate, universal adoption of PQC is not
operationally realistic in many settings. First, real-world
PQC implementations have already exhibited unexpected weaknesses:
SIKE---a fourth-round NIST candidate---was broken in classical
polynomial time by Castryck and Decru in
2022~\cite{castryck2023}, and lattice-based standards have been
shown to be vulnerable to side-channel and fault-injection
attacks, with multiple chosen-ciphertext key recoveries
demonstrated against Kyber/ML-KEM
implementations~\cite{ravi2023}. Second, large estates of legacy
hardware, embedded controllers, IoT devices, and operational
technology cannot accept new cryptographic primitives without
firmware or hardware replacement, and the lead time for
crypto-agile redesign is measured in years~\cite{nist-ir-8547}.
Third, post-quantum primitives carry larger keys, ciphertexts, and
signatures than their classical counterparts, and these inflations
interact poorly with bandwidth-limited or fragmentation-sensitive
protocols. The cryptographic community's response has been to
deploy \emph{hybrid} schemes that combine a classical KEM with a
PQC KEM and remain secure if either component holds; this is now
the recommended transitional path on the wire~\cite{nist-pqc-transition}.

\textbf{Our Contribution.}
We present \emph{Aquaman}, a transparent-proxy architecture
for quantum-resilient session-key establishment. A transparent
proxy intercepts session-key requests at the network layer, in
contrast to an explicit (non-transparent) proxy that requires
manual client configuration. Deployed at edge devices of a
trusted network, the proxy extends quantum resistance to
clients on the trusted side that may themselves lack
PQC-capable software stacks, and can connect such clients
across an untrusted network. \emph{Aquaman} supports four operating
modes. \textbf{Mode~1 (PQC at the proxy):} clients without
trusted PQC connect classically to the proxy, which runs a PQ
KEM to the QKMS on their behalf. \textbf{Mode~2 (classical
multi-path / anonymous paths):} when no party trusts PQC, the
proxy distributes the key as ciphertext fragments across
heterogeneous media (Wi-Fi, Bluetooth, NFC, cellular,
Ethernet) with all-fragment reconstruction, optionally routed
through an anonymous proxy pool to limit metadata leakage.
\textbf{Mode~3 (QKD key delivery):} the proxy fetches keys
from a co-located QKD device via Cisco's Secure Key Integration
Protocol (SKIP)~\cite{cisco-skip} or ETSI~GS~QKD~014's REST
API~\cite{etsi-qkd-014}; \emph{not implemented in this work}.
\textbf{Mode~4 (classical/PQC hybrid):} the proxy negotiates a
hybrid KEM (e.g., X25519~+~ML-KEM);
\emph{not implemented in this work}. Modes~3 and~4 are
well-trodden in the PQC literature and are included for
deployment completeness; the novel contribution lies in
Modes~1 and~2.

Aquaman can use multi-path mechanism for key establishment for either of these modes. The details of the multi-path protocol including anonymous paths is outside the scope of this paper and is described elsewhere\footnote{Multi-path protocol and anonymous paths are yet to be published; the  references will be updated when they are available}. The security argument for the multi-path mechanism does not
depend on the quantum-hardness of any single algorithm; it
rests on the practical infeasibility of an adversary
simultaneously compromising every channel in a structurally
diverse set. The resulting design (i)~is deployable in
PQ-incapable environments where the wire continues to use
classical encryption, (ii)~stacks safely on top of PQ KEMs as
a defense-in-depth measure, and (iii)~tolerates partial
cryptographic failure on any single channel. We instantiate
the design in a prototype, give an NFC-kiosk bootstrap, and
evaluate it on AWS EC2 across 1{,}000 runs.

The paper is structured as follows. Section~\ref{sec:background} provides background
on the quantum threat, PQC standardization, and secret sharing.
Section~\ref{sec:related} surveys related work in hybrid PQC, QKD
relay networks, and multi-path key establishment.
Section~\ref{sec:threat} formalizes the threat model.
Section~\ref{sec:design} describes the system architecture.
Section~\ref{sec:deployment} details the deployment scenarios and
bootstrap. Section~\ref{sec:eval} reports the empirical
evaluation. Section~\ref{sec:conclusion} concludes.
\section{Background}
\label{sec:background}

This section covers the technical background needed to position our
contribution: the quantum threat to public-key cryptography, the
state of the post-quantum standardization track and its
operational frictions, and secret sharing as a foundation for
threshold-style multi-path schemes. We deliberately omit
introductory material that is well covered in the cryptographic
literature; the focus here is on the points that bear directly on
our design.

\subsection{The Quantum Threat to Asymmetric Cryptography}

Shor's algorithm provides a polynomial-time quantum solution to
both integer factorization and the discrete logarithm problem in
finite fields and on elliptic curves~\cite{shor}. Because RSA, DH,
and ECC reduce, respectively, to instances of these
problems, a sufficiently powerful quantum computer breaks all
three. Symmetric primitives are less affected: Grover's
algorithm yields only a quadratic speed-up against generic key
search, so doubling key sizes (e.g., AES-256) is widely regarded
as adequate~\cite{chen2016}. Consequently, the practical brunt of
the quantum threat falls on the \emph{key establishment} layer of
secure protocols, which is precisely the layer our work targets.

The HNDL threat operationalizes this future capability today.
An adversary collects ciphertext from any classically-keyed channel
and stores it. When a CRQC becomes available, the archived
ciphertext can be decrypted in retrospect. Mosca's inequality
$X+Y \ge Z$ captures the resulting risk: organizations whose data
shelf-life ($X$) plus migration runway ($Y$) exceeds the time
to CRQC ($Z$) are already exposed regardless of when $Z$ actually
arrives~\cite{mosca2018}.

\subsection{Post-Quantum Cryptography and Its Frictions}

NIST's PQC effort has produced four standards as of this writing:
ML-KEM~(FIPS~203), ML-DSA~(FIPS~204), SLH-DSA~(FIPS~205), and a
draft Falcon-based signature standard, with HQC selected as a
fifth, code-based backup KEM in March 2025~\cite{NIST-pqc-standard-fips203,NIST-pqc-standard-fips204,NIST-pqc-standard-fips205,nist-hqc2025}.
ML-KEM (the standardized form of CRYSTALS-Kyber) is now widely
deployed in hybrid TLS handshakes alongside X25519 or NIST
P-curves~\cite{kemtls}.

Three frictions matter for our design:

\textbf{(i) Cryptanalytic risk on new primitives.} Lattice
hardness has been studied for decades, but the deployed PQC
candidates still represent a comparatively young attack surface.
The polynomial-time break of SIKE---a fourth-round NIST
candidate---by Castryck and Decru in
2022~\cite{castryck2023} is a salient reminder that confidence
in new mathematical assumptions can collapse abruptly. Even
where the underlying problem appears robust, implementations
have proven fragile: chosen-ciphertext, EM, and power
side-channel attacks have recovered ML-KEM and Dilithium keys
from masked and unmasked
implementations~\cite{ravi2023}. NIST's PQC migration guidance
explicitly endorses hybrid schemes during the transition for
exactly this reason~\cite{nist-pqc-transition}.

\textbf{(ii) Hardware and protocol inertia.} PQC primitives
typically have larger public keys, ciphertexts, and signatures
than RSA or ECC. ML-KEM-768 ciphertexts are over a kilobyte;
SLH-DSA signatures are tens of kilobytes. These sizes interact
poorly with constrained protocols (DNSSEC, embedded TLS) and with
firmware-frozen IoT and operational-technology
hardware~\cite{nist-ir-8547}. Crypto-agility---the ability to
swap primitives without redesign---has accordingly become a
first-class design goal, but it is itself a multi-year program for
large enterprises.

\textbf{(iii) Standards trail deployment.} NIST IR~8547 sets a
deprecation horizon of 2030 for $\le$112-bit-strength quantum-vulnerable
algorithms (e.g., RSA-2048, P-256) and full disallowance by
2035~\cite{nist-ir-8547}. The window between today and that
horizon---during which classically-keyed traffic will still be
generated---is precisely the window the HNDL adversary exploits.

\subsection{Secret Sharing and Multi-Path Distribution}

The classical primitive for splitting a secret into independently
useless shares is Shamir's $(t,n)$-threshold secret sharing
scheme~\cite{shamir1979}: any $t$ shares reconstruct the secret,
while fewer than $t$ shares reveal nothing in an
information-theoretic sense. Our design uses a $(n,n)$-style split
in spirit---all fragments are needed for reconstruction---rather
than a true polynomial-interpolation Shamir scheme, because (a)~we
do not require partial-availability tolerance, and (b)~the
heterogeneous-channel adversarial model already provides the
unavailability guarantee we need.

Multi-path transport is a second relevant ingredient. Network
research has long exploited path diversity for throughput and
resilience (multipath TCP, equal-cost multi-path, and so on), and
several lines of work have applied secret-sharing-style splitting
\emph{across} such paths to obtain confidentiality even when an
adversary controls some
paths~\cite{barnett2012}. Our contribution
in this lineage is the deliberate use of \emph{transmission-medium
heterogeneity}---Wi-Fi vs. Bluetooth vs. NFC vs. cellular, etc.,
not just disjoint IP routes---as the source of adversarial cost.
Compromising a Wi-Fi link and a Bluetooth link and an NFC reader
simultaneously requires fundamentally different attack
capabilities and proximities, and an adversary's marginal cost of
adding a new medium is much higher than the cost of adding a new
IP path.

\subsection{Why Not QKD?}

Quantum key distribution (QKD), originating with the BB84
protocol~\cite{bb84}, achieves information-theoretic key
establishment by exploiting the no-cloning theorem and
measurement disturbance. QKD is genuinely
quantum-secure---its security does not rest on any computational
assumption. However, current QKD systems are limited by
distance (typically a few hundred kilometers without trusted
relays), specialized hardware (single-photon sources and
detectors), and cost. Most enterprise deployments have neither
the optical infrastructure nor the budget for QKD. Our design
explicitly targets the much larger population of deployments
that must achieve quantum resilience over commodity
\emph{classical} transmission media.

\section{Related Work}
\label{sec:related}

Seven lines of prior work are relevant to a transparent-proxy
architecture for quantum-resilient session-key establishment:
hybrid classical/PQC constructions, secret sharing on QKD
relay networks, multi-path and multi-interface network
security, threshold and multi-secret sharing, anonymous routing
and mix networks (relevant to the proxy-pool variant),
post-quantum migration of anonymity networks, and post-quantum
cryptography deployed at the network edge. We describe each in
turn.

\subsection{Hybrid Classical/PQC Constructions}

The mainstream operational response to cryptanalytic uncertainty
around new PQC primitives is the \emph{hybrid} construction:
combine a classical KEM (e.g., X25519) with a PQ KEM (e.g.,
ML-KEM) so that the derived shared secret is secure as long as
either component is unbroken. Schwabe, Stebila, and
Wiggers~\cite{kemtls} introduced KEMTLS, which replaces
signature-based authentication in the TLS~1.3 handshake with
KEMs and is structurally amenable to such hybridization.
Stebila and Mosca~\cite{oqs2017} introduced the Open Quantum
Safe project, which packages liboqs (a C library of
quantum-resistant primitives) together with prototype
integrations into OpenSSL and other widely deployed
cryptographic stacks; OQS has subsequently served as the
substrate for many of the experimental hybrid TLS deployments
in the literature. Production deployments have followed:
Cloudflare, Google, AWS, and Meta have rolled out ML-KEM/X25519
hybrid handshakes in TLS~1.3, and NIST's migration
guidance~\cite{nist-pqc-transition} explicitly endorses hybrid
constructions during the transitional period. Hybrids provide
an \emph{algorithmic} hedge against the failure of a single
cryptographic primitive while leaving the surrounding protocol,
endpoint, and channel structure unchanged.

\subsection{Secret Sharing on QKD Relay Networks}

Barnett and Phoenix~\cite{barnett2012} described how to secure
a quantum-key-distribution relay network by treating
intermediate relays as trusted eavesdroppers and applying
secret sharing across multiple physical paths through the
relay graph. An adversary recovering the shared key must
compromise every relay on every path; the construction
leverages the information-theoretic security of QKD on each
link together with the combinatorial requirement of all-path
compromise. The line of work assumes specialized photonic
infrastructure and QKD-capable relays, and is targeted at
high-assurance settings such as metropolitan-scale QKD
backbones.

\subsection{Multi-Path and Multi-Interface Network Security}

Multipath TCP (MPTCP) and analogous transport-layer protocols
distribute application data across multiple network paths for
throughput and resilience; as a side effect, an on-path
observer of a single subflow recovers only a fragment of the
data. Subsequent proposals harden MPTCP by applying separate
VPN tunnels (e.g., per-subflow WireGuard) so that interception
of any one subflow leaks only encrypted, partial information.
In the information-theoretic literature, post-quantum-secure
and reliable transmission over heterogeneous networks has been
achieved through linear network coding across multiple noisy
links, given public-key encryption available on a subset of
trusted links. The thrust of this literature is bulk-data
transmission over an essentially homogeneous IP fabric
(multiple paths through the same Internet), with security
arising from path multiplicity rather than from medium
diversity.

\subsection{Threshold and Multi-Secret Sharing}

Shamir~\cite{shamir1979} introduced $(t,n)$-threshold secret
sharing, in which a secret is split into $n$ shares such that
any $t$ shares recover it but fewer than $t$ reveal nothing
about it. Subsequent work has produced verifiable, proactive,
information-theoretically private, and post-quantum-secure
variants targeted at distributed storage, threshold
cryptography, and blockchain-style settings. These
constructions optimize for share verifiability, partial
availability under share loss, and per-share storage cost in a
model where shares reside at independent storage nodes for
extended periods.

\subsection{Anonymous Routing and Mix Networks}

Chaum~\cite{chaum1981} introduced the mix network: a sender
nests encryptions, one layer per relay, so that each relay
strips a single layer and forwards the result, and no single
relay sees both source and destination. The construction
provides unlinkability between sender and recipient under the
assumption that at least one mix in the cascade is honest.
Dingledine, Mathewson, and Syverson~\cite{tor2004} adapted
this idea to interactive, low-latency traffic in Tor, using
telescoping circuits, perfect forward secrecy, directory
servers, and configurable exit policies to provide anonymity
against an adversary observing or controlling a fraction of
relays. Standard analyses bound the probability that an
adversary controlling $a$ of $P$ relays correlates an entry/exit
pair as approximately $(a/P)^{2}$, under the assumption that
hops are sampled approximately independently---the same bound
that arises in the proxy-pool analysis of
Section~\ref{sec:pool-formal}.

\subsection{Post-Quantum Migration of Anonymity Networks}

Berger, Lemoudden, and Buchanan~\cite{torpq2025} survey the
cryptographic primitives used at each layer of the Tor stack
and quantify the performance overhead of replacing the ntor
circuit-extension handshake with a hybrid KEM-based
construction (e.g., ML-KEM combined with X25519). Their
benchmarks isolate classical-cryptography time on constrained
devices and project PQC overhead by substituting post-quantum
primitive timings into the same circuit. The work focuses on
data-path cryptography of an existing anonymity network rather
than on the key-establishment phase itself.

\subsection{Post-Quantum Cryptography at the Network Edge}

A growing body of work places post-quantum cryptographic
capability at the network edge---in CDN POPs, cloud gateways,
or trusted-network boundary devices---rather than at every
endpoint. Industrial deployments of edge PQC TLS by Cloudflare,
Google, AWS, and Akamai are documented in their respective
engineering blogs and use the same hybrid X25519~+~ML-KEM
ciphersuites discussed in the previous subsections;
academically, Amiriara, Mirmohseni, and
Tafazolli~\cite{amiriara2025} formalize an edge-computing PQC
framework for resource-constrained IoT devices, in which
devices offload heavyweight cryptographic operations to a
post-quantum edge server (PQES) and use physical-layer
techniques (wiretap coding, friendly jamming) to protect the
device-to-edge link during offload. The motivation in this
line of work is that the computation and bandwidth cost of
PQC primitives is prohibitive on constrained devices, but
manageable when concentrated at a small number of edge nodes;
the edge node thereby acts as a cryptographic gateway for the
clients behind it.
\section{Threat Model and Assumptions}
\label{sec:threat}

We consider a networked setting in which two trusted endpoints
wish to establish a session key with the assistance of a
trusted Quantum Key Management Server (QKMS), in the presence
of an adversary that combines classical network-level
capabilities with future quantum cryptanalytic capabilities.

\subsection{Adversary Model}

\textbf{Network observation.} The adversary can passively
observe traffic at one or more points in the network, capturing
ciphertext and metadata. This subsumes the
\emph{harvest-now, decrypt-later} (HNDL) capability: archived
ciphertext can be retained indefinitely and decrypted retroactively
when computational capability permits.

\textbf{Quantum cryptanalysis.} The adversary is assumed to have,
either now or at some future point of decryption, sufficient
quantum computational capability to break any classical asymmetric
primitive whose security reduces to integer factorization or
discrete logarithm (RSA, DH, ECC). For symmetric primitives, we
assume Grover-bounded attacks only; AES-256 and SHA-2/3 at
$\ge 256$ bits remain quantum-secure under the standard
assumptions~\cite{chen2016}.

\textbf{Limited-but-non-trivial PQC compromise.} In the PQ-capable
deployment scenario (Section~\ref{sec:deployment}), we further
assume the adversary may know an exploitable weakness in the
\emph{specific} post-quantum implementation deployed on the wire
(e.g., a side-channel or chosen-ciphertext attack against a Kyber
implementation~\cite{ravi2023}, or an isogeny-style polynomial-time
break analogous to Castryck--Decru~\cite{castryck2023}). The
adversary does not, however, possess a generic break of the PQC
algorithm family; if it did, the protection budget shifts entirely
to the multi-path layer.

\textbf{Channel compromise (capacity vector).} The adversary's
per-medium surveillance capability is represented by a vector
$\mathbf{c} = (c_1, \ldots, c_d) \in [0,1]^d$ over the
$d$ medium types in use, where $c_i$ is the probability that
the adversary intercepts and decrypts an arbitrary fragment
transmitted on a channel of type $i$. Compromising a Wi-Fi
network, a Bluetooth pairing, an NFC reader, a cellular
basestation, or a wired Ethernet segment each requires distinct
attack capabilities, distinct physical proximity or
infrastructure access, and distinct operational footprints;
critically, capability on one medium type does not transfer to
another (an SDR within range of a Wi-Fi link does not help
compromise an NFC reader). We accordingly model per-type
compromise events as independent. The adversary's overall
resources are bounded by a budget constraint
$\sum_{i=1}^{d} c_i \le B$ for some $B \in (0, d)$; this is the
central operational assumption of the scheme.
Section~\ref{sec:heterogeneity} formalizes the model and shows
that recovery probability decays as $(B/d)^n$ in the diversity
dimension $d$.

\subsection{Out of Scope}

We do not defend against:
(a)~compromise of the QKMS itself or of either endpoint's
private key;
(b)~an adversary capable of \emph{simultaneously} compromising
\emph{every} channel in the chosen heterogeneous set;
(c)~denial-of-service attacks on the channel set sufficient to
prevent reconstruction (these reduce availability but not
confidentiality);
(d)~side-channel attacks on the endpoint hardware
(timing, power, EM emanations).

\subsection{Trust Assumptions}

The QKMS is trusted with respect to confidentiality of the
generated session key (it holds the key in plaintext at
generation time). Its operator is therefore at minimum
on the same trust footing as a certificate authority or a
key-management appliance in a current enterprise PKI. The optional
transparent proxy is trusted only to forward fragments
faithfully; it never sees the key in plaintext, since fragments
are individually encrypted under the recipient's classical public
key before they leave the QKMS.

\section{System Design}
\label{sec:design}

The system has three logical components: (i)~a Quantum Key
Management Server (QKMS) that generates session keys and
distributes them as encrypted fragments across heterogeneous
channels; (ii)~the communicating endpoints (clients) that
request keys and reconstruct them from fragments; and (iii)~a
transparent proxy that mediates between client and QKMS,
intercepting client traffic at the network layer without
requiring client-side configuration. The proxy is the central
architectural primitive of \emph{Aquaman}: it allows
quantum-resistant capability to be deployed at the boundary of
a trusted network and extended to clients on the trusted
side---including clients that themselves lack PQC-capable
software stacks. Each component exposes its functionality as
a small set of HTTP endpoints implemented in Flask.

\subsection{Quantum Key Management Server (QKMS)}

The QKMS accepts a structured key-request message and replies
with a session key whose generation parameters are dictated by
the requester. The current implementation defines five
parameters:

\textbf{tagname.} A unique identifier mutually agreed
upon out-of-band by the two communicating clients. Key generation
proceeds only when the tagnames presented by both endpoints match;
this binds a key issuance to a specific session and prevents
unsolicited fragment delivery.

\textbf{key type.} The bit-length of the symmetric session key
to be generated. The key itself is sampled uniformly at random
from the chosen length.

\textbf{number of splits.} The number of fragments $n$ into which
the key is partitioned. Reconstruction requires \emph{all} $n$
fragments; the scheme does not use a $(t,n)$ threshold.

\textbf{shuffle.} A boolean flag instructing QKMS to randomly
permute fragment order before transmission. Combined with the
per-fragment encryption (below), this prevents an adversary that
captures a subset of fragments from inferring their position in
the original key.

\textbf{channels.} A list of available transmission media,
selected from a heterogeneous set: Wi-Fi, Bluetooth, NFC,
cellular, Ethernet, and distinct logical ports on the same
device. The QKMS draws fragments and channels independently and
assigns each fragment to a channel chosen uniformly at random;
some channels may carry zero fragments, others more than one.

After fragmentation and (optional) shuffling, each fragment is
individually encrypted under the recipient's classical public key.
The resulting ciphertext fragments are then dispatched across the
selected channels. The security argument (formalized in
Section~\ref{sec:deployment}) is that an adversary must
simultaneously compromise every channel that carries at least one
fragment in order to recover even \emph{one} fragment in
plaintext, since the per-fragment encryption hides the fragment
even from a passive same-channel observer; and the adversary must
recover \emph{every} fragment in order to reconstruct the
key. The heterogeneity of the channel set is the practical lever
that makes simultaneous total compromise infeasible.

QKMS exposes a single \texttt{POST /get-key-parameters} endpoint.
The request body carries the parameters above along with the the requester's
public key; this triggers fragment generation and dispatch.

\subsection{Transparent Proxy}

The proxy decouples the client from the QKMS and is the
mechanism by which \emph{Aquaman} supports clients that cannot, or
should not, contact the QKMS directly. Two configurations are
supported. As a \emph{transparent proxy}, it is deployed at an
edge router or gateway of a trusted network and intercepts
client traffic at the network layer without requiring the
client to configure proxy settings; this is the deployment we
emphasize, since it allows quantum-resistant capability to be
added to a network without modifying the client devices it
serves. As an \emph{explicit proxy}, the client receives the
proxy's address (e.g., during the bootstrap step of
Section~\ref{sec:deployment}) and dispatches its traffic to it
directly. In both configurations the proxy supports the same
operating modes (Section~\ref{sec:deployment}). Two motivations
apply: (i)~the client may wish to remain anonymous to the QKMS;
(ii)~the client may lack the multi-interface capability to
expose heterogeneous channels (e.g., a cloud workload reachable
only via a single Internet path). In the latter case, the
proxy can expose its own heterogeneous channel set on the
client's behalf.

The proxy receives the client's payload (key parameters and public
key) and forwards it to the QKMS together with the proxy's own
channel list. Encrypted fragments returned over the
heterogeneous channels are aggregated at the proxy and forwarded
to the client over a single, ordinary IP path; because the
fragments are encrypted under the \emph{client's} public key (not
the proxy's), the proxy never sees the key in plaintext. The
proxy exposes a \texttt{POST /get-key-parameters} endpoint
mirroring the QKMS's, and forwards each received fragment to the
client's \texttt{POST /receive-key-fragment} endpoint.

\subsection{Communicating Party (Client)}

The client initiates the session-key acquisition. It either
contacts the QKMS directly (when it has multi-interface
capability) or contacts a proxy. Its outgoing payload contains:
\textit{key type}, \textit{number of splits}, \textit{tagname},
\textit{shuffle}, and the client's classical \textit{public key}.
The client's Flask application exposes either
\texttt{POST /receive-key-fragment} (to receive fragments from a
proxy) or the per-channel listening endpoints (to receive
fragments directly from the QKMS). Upon receipt, the client
decrypts each fragment with its private key, validates that the
expected number of fragments has arrived, undoes any shuffle
order, and concatenates the fragments to reconstruct the session
key. The reconstructed key is then used for symmetric encryption
of the application traffic (e.g., AES-256-GCM).
\section{Deployment Scenarios}
\label{sec:deployment}

The high-level problem we address: two parties wish to
communicate securely and trust a QKMS to provide session keys in
a quantum-resilient manner. We organize the discussion by the
PQC capability of the entities involved, with both
configurations preceded by a common bootstrap.

\subsection{Bootstrap}

A client begins by physically approaching a designated kiosk and
presenting a pre-registered badge that authenticates over Near
Field Communication (NFC). On successful authentication, the
kiosk issues a QR code containing three fields:
\begin{enumerate}
\item the IP address of a temporary proxy node that will mediate
between the client and the QKMS;
\item an expiration timestamp bounding the validity window of the
key request;
\item a digital signature, computed under the kiosk's private key,
over the previous two fields.
\end{enumerate}
The signature ensures integrity and authenticity---in particular
preventing a client from extending its authorized window---and
the expiration timestamp limits the temporal scope of the
issued credential. If the client elects to use a proxy, it
forwards the QR-code contents to the chosen proxy. We assume the
client--proxy link is itself secure (e.g., direct LAN, or
classically-tunneled over a single trusted network); this is a
realistic assumption when the proxy is operated by the same
organization as the kiosk.

The bootstrap is needed because the QKMS's location must reach the
client in a quantum-safe manner before any cryptographic exchange
takes place. NFC's short range (typically a few centimeters) is
the relevant security property: an HNDL adversary cannot harvest
NFC traffic at scale or remotely, and the air-gapped nature of
the kiosk interaction places this step outside the over-the-Internet
threat surface.

\subsection{Scenario 1: PQ-Incapable Environment}

Neither the clients nor any proxies have post-quantum
capabilities; all wire-line cryptography is classical.
Consequently every link is vulnerable to a future quantum
adversary or to HNDL collection.

\paragraph*{Multi-path between QKMS and client/proxy.}
If the client uses a proxy, the proxy is treated as trusted from
the client's perspective. The proxy exposes its heterogeneous
channel set to the QKMS on the client's behalf. Communication
between the client and the proxy is a direct IP-based channel
(no anonymization). The QKMS encrypts each fragment under the
client's classical public key and dispatches the fragments across
the proxy's heterogeneous channels; the proxy aggregates and
forwards them to the client.

\paragraph*{Private proxy pool between client and QKMS.}
A more conservative variant assumes the client controls a private
pool of proxy nodes. For each request, the client randomly
selects an \emph{entry node} from the pool and forwards its
payload there. Each proxy that receives the payload appends its
own channel list to the payload and then independently decides
either (a)~to forward the payload to another proxy in the pool or
(b)~to deliver it directly to the QKMS. The proxy that ultimately
contacts the QKMS is the \emph{exit node}.

A parameter $\mathrm{MAX\_HOPS}$ bounds the number of intra-pool
proxy hops. The forwarding decision at each hop may be uniformly
random or weighted (e.g., a probability of forwarding that
decreases with the hop count, biasing the path toward a finite
length). Once the QKMS has received the payload, it dispatches
encrypted fragments either to the entry node or to the exit
node via the multi-path mechanism. If sent to the entry node,
fragments are forwarded to the client. If sent to the exit
node, the fragments traverse a similar pool-routed path back to
the entry node before being forwarded to the client.

This pool variant adds an anonymity layer: no single proxy in
the pool sees the full request flow, and the entry/exit role
is randomized per request.

\textbf{Formalization of Proxy-Pool Anonymity. }
\label{sec:pool-formal}
The proxy-pool variant adds a metadata-privacy layer that is
logically independent of the multi-path secrecy layer of
Section~\ref{sec:heterogeneity}. We make this independence
explicit: the pool contributes \emph{nothing} to the
cryptographic secrecy of the established session key. Its role
is to constrain what an external observer or the QKMS itself
can learn about \emph{which client is interacting with the
QKMS}. We state the (modest) guarantees the pool does provide
and---equally importantly---the guarantees it does not.

\paragraph*{Setup.} The client $C$ controls a pool
$\mathcal{P}$ of $P=|\mathcal{P}|$ proxies. For each request,
$C$ selects an entry $e\in\mathcal{P}$ uniformly at random.
Each proxy receiving the payload independently forwards to a
uniformly chosen successor in $\mathcal{P}$ with probability
$q$ or delivers to the QKMS with probability $1-q$, subject to
the hop cap $h_{\max}=\mathrm{MAX\_HOPS}$. The number of hops
$h$ on a request is a truncated geometric random variable with
$\mathbb{E}[h]\le \min\!\big(1/(1-q),\,h_{\max}\big)$.

\paragraph*{Adversary.} Two adversary classes are relevant,
both consistent with the threat model of
Section~\ref{sec:threat}:
(i)~the QKMS itself (or anyone restricted to the QKMS's view),
which observes only the exit;
(ii)~a network observer $\mathcal{A}$ that surveils a fixed
subset $S\subseteq\mathcal{P}$ of size $a$, meaning it can read
inter-proxy traffic on those proxies---including under quantum
capability that defeats the classical hop-to-hop encryption.

\paragraph*{Claim 1 (anonymity from the QKMS).} The QKMS sees
only the exit identity $x$, not the entry $e$ or $C$. Under
uniform forwarding, $x$ is a uniform draw from $\mathcal{P}$
that is independent of $C$'s identity. The QKMS therefore gains
no information about which client originated the request beyond
``some client controlling some proxy in $\mathcal{P}$''. This
guarantee is unconditional---it does not rest on the security
of any cipher.

\paragraph*{Claim 2 (path-tracing bound).} For an
$\mathcal{A}$ surveilling $S$, the probability that every hop
along a request path of length $h$ lies in $S$ is, for hops
drawn approximately independently from $\mathcal{P}$,
\[
\Pr[\,\text{path}\subseteq S\,] \;=\; (a/P)^{h}.
\]
This bounds the probability that $\mathcal{A}$ traces the
request hop-by-hop back to the entry by stepping through
observed proxies. The bound decays exponentially in $h$.

\paragraph*{Claim 3 (entry-exit correlation bound).}
Linking a client-to-entry observation with an exit-to-QKMS
observation by traffic correlation requires $\mathcal{A}$ to
observe both the entry $e$ and the exit $x$. Treating $e$ and
$x$ as approximately independent uniform draws (exact for
$h\ge 2$ in the limit of large $P$),
\[
\Pr[\,e\in S \,\wedge\, x\in S\,] \;\approx\; (a/P)^{2}.
\]
This matches the standard onion-routing
analysis~\cite{tor2004}.

\paragraph*{What the pool does \emph{not} buy.}
The pool offers no additional protection against the recovery
of the session key itself. An adversary capable of compromising
the multi-path layer of Section~\ref{sec:heterogeneity} is not
weakened by the pool's presence, and an adversary incapable of
compromising that layer already fails regardless of the pool.
The two layers are orthogonal: multi-path protects \emph{what is
said}, the pool protects \emph{who is saying it} to the QKMS.
Stacking the pool atop multi-path therefore yields a
two-property defense (secrecy~$\times$~metadata privacy), not a
single, stronger secrecy property.

\paragraph*{Limitations.} The bounds presume:
(i)~hops are sampled approximately independently from
$\mathcal{P}$, as required for the product-form probabilities;
(ii)~$\mathcal{A}$'s surveillance set $S$ is fixed in advance
of the request and not adapted on observed routing;
(iii)~no timing- or volume-based side channel allows
$\mathcal{A}$ to correlate ingress and egress of an unobserved
hop without being on its wire. Side-channel correlations
generally require cover traffic or batched
mixing~\cite{chaum1981} to defeat, and we add neither; the
classical mix-network and Tor-style traffic-analysis literature
applies here unchanged and constrains what the pool can
guarantee.

\subsection{Scenario 2: PQ-Capable Environment}

The endpoints lack quantum capability themselves but employ
PQ-capable proxies for key exchange. Crucially, the deployed
PQC implementation may itself be vulnerable---e.g., to a
side-channel attack against ML-KEM or to a future algorithmic
break analogous to that on SIDH. The bootstrap is omitted in this
scenario because the explicit goal is to provide
quantum-resilient security \emph{without} relying on PQC alone.

\paragraph*{Multi-path on top of PQ KEM}
A post-quantum KEM (e.g., ML-KEM) is established between the
client/proxy and the QKMS. The shared secret derived from the PQ
KEM is used as a symmetric key (AES) protecting the subsequent
exchange. The client forwards key parameters and its classical
public key to the QKMS through this PQ-protected channel,
optionally via a proxy. The QKMS then dispatches the key
fragments---each encrypted under the client's classical public
key---across the heterogeneous channels.

The two layers are structurally independent: the wire-line
encryption uses ML-KEM, while the per-fragment payload encryption
uses RSA/ECC. To recover a fragment, the adversary must break
\emph{both} the ML-KEM channel and the per-fragment classical
encryption \emph{on the same channel}; to recover the key, this
must succeed for every fragment-bearing channel. Even if the
deployed ML-KEM implementation is later found to be vulnerable,
the multi-path layer continues to require simultaneous
compromise of every heterogeneous medium that carried a fragment.

\subsection{A Formal Capacity Model}
\label{sec:heterogeneity}

We formalize the heterogeneity argument as follows. Let
$\mathcal{T} = \{1, \dots, d\}$ index the distinct \emph{medium
types} (Wi-Fi, Bluetooth, NFC, cellular, Ethernet, etc.); we
call $d$ the \emph{diversity dimension}. The QKMS partitions
the session key into $n$ fragments and assigns $n_i$ fragments
to type $i$, with $\sum_i n_i = n$. The adversary's per-type
capability is the capacity vector $\mathbf{c} \in [0,1]^d$
introduced in Section~\ref{sec:threat}, subject to a budget
constraint
\begin{equation}
\sum_{i=1}^{d} c_i \;\le\; B,
\qquad B \in (0, d).
\label{eq:budget}
\end{equation}
Assuming per-type compromise events are independent and that
the per-fragment classical encryption is broken on every
surveilled channel (worst case for the multi-path layer), the
recovery probability is
\begin{equation}
P_{\mathrm{rec}}(\mathbf{c}, \{n_i\})
\;=\; \prod_{i=1}^{d} c_i^{\,n_i}.
\label{eq:rec}
\end{equation}

\begin{theorem}[Recovery decays in the diversity dimension]
\label{thm:decay}
With uniform fragment distribution $n_i = n/d$, the maximum
recovery probability over capacity vectors satisfying
\textnormal{(\ref{eq:budget})} is
\[
P_{\mathrm{rec}}^\star(d) \;=\; \left( \frac{B}{d} \right)^{n}.
\]
\end{theorem}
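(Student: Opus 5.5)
With $n_i = n/d$ for every type, the objective (\ref{eq:rec}) becomes
\[
\prod_{i=1}^{d} c_i^{\,n/d} = \Big(\prod_{i=1}^{d} c_i\Big)^{n/d}.
\]
Since $t\mapsto t^{n/d}$ is increasing on $[0,\infty)$, the plan is to maximize the plain product $\prod_i c_i$ over the feasible set $\{\mathbf{c}\in[0,1]^d : \sum_i c_i \le B\}$ and then raise the maximum to the power $n/d$.

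\textbf{Upper bound.} For any feasible $\mathbf{c}$, AM--GM gives
\[
\prod_i c_i \le \Big(\tfrac{1}{d}\sum_i c_i\Big)^{d} \le (B/d)^d.
\]
AM--GM applies because every $c_i \ge 0$, and the second step uses the budget (\ref{eq:budget}). Raising both sides to the power $n/d$ yields $P_{\mathrm{rec}}(\mathbf{c},\{n/d\}) \le (B/d)^{n}$ for every feasible $\mathbf{c}$.

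\textbf{Attainment.} Take $c_i = B/d$ for all $i$. Because $B\in(0,d)$, each coordinate satisfies $0 < B/d < 1$, so $\mathbf{c}\in[0,1]^d$. The coordinates sum to exactly $B$, so the budget holds. The product is $(B/d)^d$, hence $P_{\mathrm{rec}} = (B/d)^n$. The supremum is therefore attained, and $P^\star_{\mathrm{rec}}(d) = (B/d)^n$.

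\textbf{Where care is needed.} The calculation itself is routine; the main risk is in the modeling hypotheses.

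\emph{Box constraint.} One should note that $c_i\le 1$ is never binding at the optimum, precisely because $B<d$. This is what makes the unconstrained AM--GM optimum feasible; if $B \ge d$ were allowed, the answer would cap at $1$.

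\emph{Divisibility.} The statement tacitly assumes $d \mid n$ so that $n/d$ is an integer. Even if it is not, the argument goes through verbatim with real exponents, since the objective is still an increasing function of $\prod_i c_i$.

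\emph{Scope of the claim.} The phrase ``decays in $d$'' for fixed $B$ and $n$ follows immediately, because $(B/d)^n$ is strictly decreasing in $d$. I would note, but not prove, that this describes the adversary's best response to the uniform allocation. It does not say that uniform allocation is the defender's minimax-optimal choice; that would require a separate saddle-point argument, which is not claimed here.
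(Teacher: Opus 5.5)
Your proposal is correct and follows the paper's proof: you use $n_i = n/d$ to reduce the problem to maximizing $\prod_i c_i$, bound that product by AM--GM and the budget to get $(B/d)^d$, and attain the bound at $c_i = B/d$. You also check explicitly that $B/d<1$, so the maximizer lies in $[0,1]^d$; the paper leaves this implicit, and it is worth keeping.
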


\begin{proof}
With $n_i = n/d$, equation~(\ref{eq:rec}) becomes
$P_{\mathrm{rec}} = (\prod_i c_i)^{n/d}$. By AM--GM,
$\prod_i c_i \le (\sum_i c_i / d)^d \le (B/d)^d$, with
equality at $c_i = B/d$. Hence
$P_{\mathrm{rec}}^\star = ((B/d)^d)^{n/d} = (B/d)^n$.
\end{proof}

\noindent
\emph{Defender-optimality of uniform $n_i$.} For an arbitrary
allocation, a Lagrangian computation gives
$P_{\mathrm{rec}}^\star(\{n_i\}) = (B/n)^n \prod_i n_i^{n_i}$,
which is minimized over $\sum n_i = n$ at $n_i = n/d$ by
strict convexity of $x \log x$. Uniform fragment distribution
is therefore the defender's minimax strategy.

\begin{corollary}[Required diversity]
\label{cor:diversity}
$P_{\mathrm{rec}}^\star \le \varepsilon$ whenever
$d \ge B \cdot \varepsilon^{-1/n}$.
\end{corollary}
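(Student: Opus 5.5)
The plan is to derive Corollary~\ref{cor:diversity} directly from the closed form in Theorem~\ref{thm:decay}. That theorem gives
\[
P_{\mathrm{rec}}^\star(d) = (B/d)^n
\]
under uniform allocation $n_i = n/d$. The corollary should then reduce to inverting this expression.

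The steps are as follows.

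\textbf{Step 1.} Note that $B>0$ and $d>0$, so $B/d>0$. Since $n\ge 1$, the map $t\mapsto t^n$ is strictly increasing on $[0,\infty)$, and so is its inverse $t\mapsto t^{1/n}$.

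\textbf{Step 2.} Assume $d \ge B\,\varepsilon^{-1/n}$ with $\varepsilon\in(0,1]$. Dividing by $d>0$ and multiplying by $\varepsilon^{1/n}>0$ gives $B/d \le \varepsilon^{1/n}$.

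\textbf{Step 3.} Raise both sides to the $n$th power. Monotonicity from Step 1 gives $(B/d)^n \le \varepsilon$, which is $P_{\mathrm{rec}}^\star \le \varepsilon$ by Theorem~\ref{thm:decay}.

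I would also remark on two boundary cases. For $\varepsilon \ge 1$ the claim is trivial, since $P_{\mathrm{rec}}^\star \le 1$ always: each $c_i \in [0,1]$. The converse also holds, so the threshold $B\,\varepsilon^{-1/n}$ is sharp under the uniform allocation.

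The only step needing care is the interface with the theorem's hypotheses. Theorem~\ref{thm:decay} assumes $B<d$ and that $n/d$ is an integer, since $n_i = n/d$ must be a fragment count. The corollary's condition $d \ge B\varepsilon^{-1/n}$ already implies $d\ge B$ when $\varepsilon\le 1$, so the budget range is consistent. The strict case $d=B$ forces $\varepsilon = 1$, where the claim is trivial.

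For integrality, I would state that the corollary is read under the same uniform-allocation hypothesis as the theorem. Alternatively, I would observe that the AM--GM argument in the proof of Theorem~\ref{thm:decay} goes through verbatim with real exponents $n/d$, because the formula $\prod_i c_i^{n/d}$ remains well defined. I expect this hypothesis bookkeeping, rather than the inequality itself, to be the main point requiring attention.
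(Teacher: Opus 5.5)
Your proposal is correct and takes the same route as the paper. The paper gives no separate proof because the corollary is just the monotone inversion $(B/d)^n \le \varepsilon \iff d \ge B\varepsilon^{-1/n}$ of the closed form in Theorem~\ref{thm:decay}; your extra bookkeeping (restricting to $\varepsilon \in (0,1]$, the boundary case $d = B$, and the implicit requirement that $n/d$ be an integer) is sound and a bit more careful than the paper.
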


\noindent
\emph{Example.} With $n = 8$ fragments and adversary budget
$B = 2$ (the adversary can fully saturate two medium types
in expectation), $d = 4$ yields
$P_{\mathrm{rec}}^\star \le 2^{-8}$---a configuration
achievable on a commodity smartphone exposing Wi-Fi,
Bluetooth, NFC, and cellular interfaces. Doubling to $d = 8$ tightens the bound
to $2^{-16}$.

\noindent
\emph{Convex-cost generalization.} For per-type costs
$\phi(c) = c^k$ with $k \ge 1$ in place of (\ref{eq:budget}),
the same argument gives $P_{\mathrm{rec}}^\star(d) = (B/d)^{n/k}$;
decay in $d$ persists for any $k \ge 1$.

\noindent
\emph{Limitations.} The model assumes per-type independence.
A nation-state adversary that controls infrastructure for
several medium types simultaneously violates this assumption;
in that regime the effective diversity dimension is the
number of \emph{independent control surfaces} rather than the
nominal $d$. Spatial co-location (a Bluetooth tap and a Wi-Fi
tap that share proximity to the client) similarly correlates
capability across types. Strengthening the model to handle
correlated capability is left to future work.

\subsection{Modes Discussed but Not Implemented}
\label{sec:future-modes}

The transparent-proxy architecture admits two further
quantum-resistance modes that we describe here for
completeness but that are out of scope for the current
prototype. Both are well-trodden in the PQC literature; we
sketch how they slot into the proxy and defer empirical
evaluation to future work.

\paragraph*{Mode 3: QKD with SKIP / ETSI~014 key delivery.}
A QKD device co-located with the proxy generates symmetric
keys on a quantum channel and delivers them to the proxy via
a standardized key-delivery interface---either Cisco's Secure
Key Integration Protocol (SKIP)~\cite{cisco-skip}, or
ETSI~GS~QKD~014's REST-based key delivery
API~\cite{etsi-qkd-014}. The proxy then uses
these keys directly to wrap the session key returned to the
client (or as IKEv2 post-quantum preshared keys, in IPsec
deployments). This mode requires dedicated QKD hardware on a
quantum-capable physical link and is therefore most relevant
in metropolitan- or campus-scale deployments where such
infrastructure is already present; the transparent proxy
serves as the integration point that lets non-QKD-capable
clients benefit from a QKD-derived session key. We do not
implement this mode in the present work.

\paragraph*{Mode 4: Classical/PQC hybrid handshake.}
The proxy negotiates a hybrid KEM (e.g., X25519~+~ML-KEM)
with the QKMS, deriving a shared secret that is secure as long
as either the classical or the PQC component holds.
KEMTLS~\cite{kemtls} specifies how a KEM-based handshake (and
its hybrid extensions) is performed on the wire. The hybrid
mode protects against the failure of any single algorithm but,
unlike Mode~2, leaves the surrounding protocol, endpoint set,
and channel structure unchanged. We do not implement this
mode in the present work.
\section{Evaluation}
\label{sec:eval}

We evaluate an implementation of the proposed system on a
prototype deployment, measuring end-to-end key-establishment
latency and decomposing it into its constituent components.

\subsection{Testbed and Methodology}

The testbed consists of five entities deployed as independent
Amazon EC2 instances running Ubuntu 24.04: two clients, a
central QKMS, and two intermediary proxy nodes (one per
client--QKMS pair). Each entity runs the Flask-based
implementation described in Section~\ref{sec:design}. The
implementation provides Wi-Fi and Bluetooth channel backends,
which are used as the heterogeneous transmission media for the
multi-path dispatch in all experiments reported below.

For each configuration we ran 1{,}000 independent trials. We
report the mean of each component on a logarithmic axis to make
the relative magnitudes legible across four orders of magnitude.

For the non-PQ scenario, the proxy/client-side latency is
decomposed into:
\begin{itemize}
\item \textbf{Decryption}: time to decrypt all fragments using
the client's classical private key.
\item \textbf{Reconstruction}: time to validate, re-order, and
concatenate fragments into a session key.
\item \textbf{Network}: aggregate transmission and queuing time
across the channel set.
\end{itemize}
The QKMS-side latency is decomposed into:
\begin{itemize}
\item \textbf{Key Generation}: time to sample the random
session key.
\item \textbf{Key Processing}: time to fragment, optionally
shuffle, and encrypt each fragment under the requester's public
key.
\item \textbf{Network}: time to dispatch fragments across
channels.
\end{itemize}
For the PQ scenario, an additional component---\textbf{PQ KEM},
covering the post-quantum key-encapsulation work used to
establish the wire-line tunnel---is reported on both sides.

\begin{figure}[t]
    \centering
    \includegraphics[width=\linewidth]{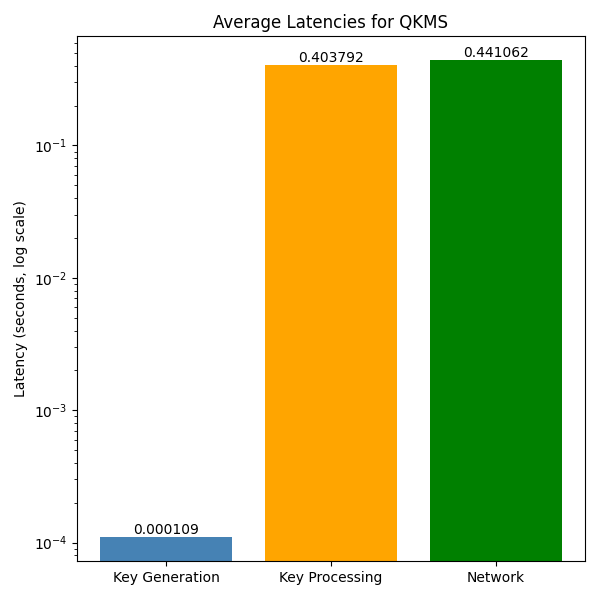}
    \caption{QKMS-side latencies (non-PQ scenario).
    Mean values over 1{,}000 runs, log scale.}
    \label{fig:qkms_mp}
\end{figure}

\begin{figure}[t]
    \centering
    \includegraphics[width=\linewidth]{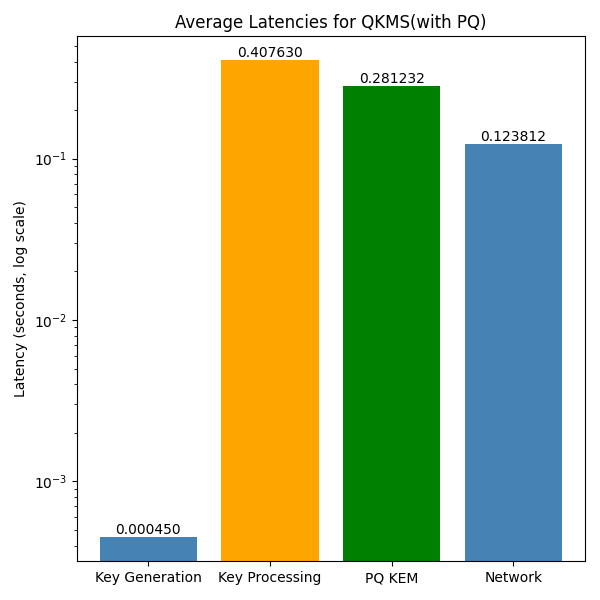}
    \caption{QKMS-side latencies (PQ scenario, multi-path on
    top of a PQ KEM). Mean values over 1{,}000 runs, log scale.}
    \label{fig:qkms_pq}
\end{figure}

\begin{figure}[t]
    \centering
    \includegraphics[width=\linewidth]{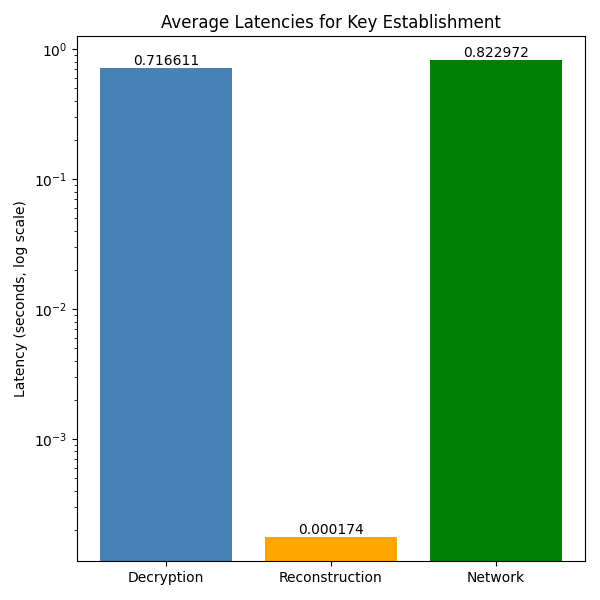}
    \caption{Proxy-side latencies for key establishment over
    multi-path (non-PQ scenario). Mean values over 1{,}000 runs, log scale.}
    \label{fig:proxy_mp}
\end{figure}

\begin{figure}[t]
    \centering
    \includegraphics[width=\linewidth]{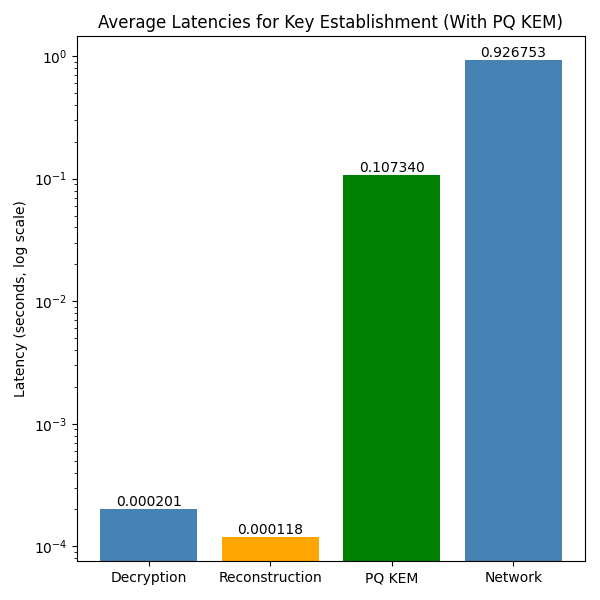}
    \caption{Proxy-side latencies for key establishment over
    multi-path on top of a PQ KEM. Mean values over 1{,}000 runs, log scale.}
    \label{fig:proxy_pq}
\end{figure}

\subsection{Results}

\textbf{QKMS-side, non-PQ (Fig.~\ref{fig:qkms_mp}).}
Key generation is essentially free at $\sim 109\,\mu$s.
Key processing---fragmentation, shuffle, and per-fragment
classical-public-key encryption---averages 404\,ms, and network
dispatch averages 441\,ms. Cryptographic work and network dispatch
are thus comparable on the QKMS side, with both dwarfing key
generation by $\sim$3{,}500$\times$.

\textbf{QKMS-side, PQ (Fig.~\ref{fig:qkms_pq}).}
Adding the PQ KEM contributes 281\,ms of additional latency on
the QKMS side. Key processing is essentially unchanged
(408\,ms). Interestingly, the network component drops to
124\,ms, an artifact of the specific instance pairing used in
this run and within the variance band of the prototype. The
overall QKMS-side cost in the PQ configuration is therefore
roughly comparable to the non-PQ case, with the PQ KEM cost
absorbed largely in parallel with network transit.

\textbf{Proxy-side, non-PQ (Fig.~\ref{fig:proxy_mp}).}
Decryption dominates at 717\,ms (this is the aggregate cost of
classical-public-key decryption for every fragment), while
reconstruction is negligible at $\sim 174\,\mu$s. Network
latency is 823\,ms. The decryption cost reflects the per-fragment
overhead of asymmetric decryption; this can be amortized either
by reducing the number of splits or by replacing per-fragment
asymmetric encryption with symmetric encryption under a one-shot
asymmetric key-wrap.

\textbf{Proxy-side, PQ (Fig.~\ref{fig:proxy_pq}).}
With a PQ KEM in place to establish the wire-line tunnel, the
reported decryption time falls dramatically to
$\sim 201\,\mu$s---this configuration uses the PQ-derived
symmetric key for transport, so per-fragment work is
symmetric rather than asymmetric. Reconstruction remains
negligible at $\sim 118\,\mu$s. PQ KEM contributes 107\,ms,
and network transit dominates at 927\,ms.

\subsection{Observations and Discussion}

Three takeaways emerge.

\emph{First, latency is dominated by network transmission, not
by the multi-path machinery.} In every configuration, the
network component is comparable to or larger than the
cryptographic work. The fragmentation, shuffling, and
per-channel dispatch impose no first-order cost beyond what the
network already imposes; the design is therefore practical to
deploy.

\emph{Second, the PQ overhead is bounded and parallelizable.}
The PQ KEM contributes 100--300\,ms across configurations,
consistent with published ML-KEM benchmarks~\cite{kemtls}. This is
a one-time per-session cost; it does not scale with the number
of fragments and is independent of the multi-path layer.

\emph{Third, per-fragment asymmetric decryption is the principal
optimization target.} The 717\,ms cost in the non-PQ proxy
configuration scales linearly with the split count. A
straightforward optimization is to use a hybrid envelope
construction (asymmetric key-wrap of a per-fragment symmetric
key) so that per-fragment work is symmetric; the PQ-scenario
measurement (201\,$\mu$s decryption when the wire is already
inside a symmetric tunnel) confirms the speedup available.

\subsection{Threats to Validity}

We report mean values across 1{,}000 runs; tail latency on
multi-path schemes can be dominated by the slowest-channel
quantile, and a percentile breakdown (especially the 95th and
99th percentiles of the proxy-side decryption and network
components) would provide a more complete operational picture.
The current evaluation also exercises two channel types
(Wi-Fi and Bluetooth); extending to additional heterogeneous
media---NFC, cellular, Ethernet---would test the diversity
dimension $d$ of Section~\ref{sec:heterogeneity} more
directly and would allow direct empirical estimation of the
adversary capacity vector $\mathbf{c}$ across a larger type
set.
\section{Conclusion}
\label{sec:conclusion}

This paper has presented a session-key establishment scheme that
remains secure against a future quantum adversary without
requiring post-quantum cryptography on the wire. The design
fragments a session key, encrypts each fragment under classical
public-key cryptography, and distributes the fragments across a
heterogeneous ensemble of transmission media. The security
argument does not depend on the quantum-hardness of any single
algorithm; rather, it follows from the practical infeasibility
of an adversary---even a quantum-capable one---simultaneously
compromising every member of a structurally diverse channel
set. We formalized this argument as a per-type capacity model
and proved (Theorem~\ref{thm:decay}) that recovery probability
decays as $(B/d)^n$ in the diversity dimension. The scheme is
deployable in PQ-incapable environments as a stand-alone hedge,
and stacks safely on top of a PQ KEM as a defense-in-depth
layer that survives the failure of the deployed PQC
implementation. Our prototype on AWS EC2 
shows that latency is dominated by ordinary network transit
rather than by multi-path overhead, indicating that the design
is practical at the protocol level. Taken together, the work
suggests that session-key confidentiality against quantum
adversaries is achievable today, on commodity classical media,
in a form that is complementary to the standardization-track
migration to PQC.
\section{Future Work}
\label{sec:future}

The design opens several directions for further investigation.

\noindent\textbf{Extending the channel set.} The current implementation
exercises Wi-Fi and Bluetooth backends; adding NFC, cellular,
and Ethernet backends would let the diversity dimension $d$ of
Theorem~\ref{thm:decay} be varied empirically, and would
support direct measurement of an end-to-end adversary cost---the
capacity-vector entries $c_i$---through red-team exercises on
each medium type.

\noindent\textbf{Hybrid envelope optimization.} Replacing per-fragment
asymmetric decryption with a symmetric envelope under a single
asymmetric (or PQ KEM) key-wrap should reduce proxy-side
decryption from hundreds of milliseconds to microseconds, as
already visible in the PQ-scenario measurement
(Section~\ref{sec:eval}).

\noindent\textbf{Threshold variants.} The current design uses an
all-fragments-required split for tightness; a
$(t,n)$-threshold variant would tolerate partial-channel
unavailability at a small information-theoretic cost.

\noindent\textbf{Strengthening the security model.} The analysis in
Section~\ref{sec:heterogeneity} assumes independent per-type
compromise. Extending the model to correlated capability---a
nation-state adversary that simultaneously controls
infrastructure for several medium types, or attacks that
exploit spatial co-location of media---would yield tighter
bounds for the high-resource adversary regime where the
effective diversity dimension differs from the nominal one.

\noindent\textbf{Implementing Mode 3 and Mode 4.}
Modes~3 (QKD with SKIP / ETSI~014 key delivery) and~4
(classical/PQC hybrid handshakes), described in
Section~\ref{sec:future-modes}, are natural integration points
for the transparent proxy. A full implementation would let the
empirical comparison---latency, throughput, infrastructure
cost---between the four modes be made on the same hardware
substrate, and would surface practical issues (key-buffer
sizing for QKD, hybrid-handshake fragmentation under MTU
limits) that are invisible at the architectural level.

\bibliographystyle{IEEEtran}
\bibliography{bib/main}

\end{document}